\newcommand{\OpenDreamKit}{the
  \href{http://opendreamkit.org}{OpenDreamKit}
  \href{https://ec.europa.eu/programmes/horizon2020/}{Horizon 2020}
  \href{https://ec.europa.eu/programmes/horizon2020/en/h2020-section/european-research-infrastructures-including-e-infrastructures}{European Research Infrastructures} project (\#\href{http://cordis.europa.eu/project/rcn/198334_en.html}{676541})}
\newcommand{\GACI}{the French National Research Agency program
  (\href{http://www.agence-nationale-recherche.fr/ProjetIA-15-IDEX-0002}{ANR-15-IDEX-02})}
\newtheorem{theorem}{Theorem}
\theoremstyle{remark}
\newtheorem{remark}{Remark}
\newcommand{\nnz}[1]{\ensuremath{\##1}\xspace}
\newcommand{\F}{\mathsf{F}}
\newcommand{\bnd}[2]{\ensuremath{#1\mathopen{}\left(#2\right)\mathclose{}}}
\newcommand{\softoh}[1]{\bnd{\tilde{O}}{#1}}
\newcommand{\tmem}[1]{{\em #1\/}}
\renewcommand{\le}{\leqslant}
\renewcommand{\ge}{\geqslant}
\renewcommand{\leq}{\leqslant}
\renewcommand{\geq}{\geqslant}
\newcommand{\rest}{\clubsuit} 
\newcommand*{\ExtractFirstChar}[1]{%
\StrRemoveBraces{#1}[\BracesRemoved]%
\StrChar{\BracesRemoved}{1}[\FirstChar]%
\FirstChar%
}
\newcommand{\Trsm}[3][]{\texttt{\ExtractFirstChar{#1}\ExtractFirstChar{#2}\ExtractFirstChar{#3}Trsm}}
\newcommand{\TrsmEC}[3][]{\Trsm[#1]{#2}{#3}\texttt{EC}}
\newenvironment{cmatrix}[1][]{\left[\begin{array}{#1}}{\end{array}\right]}
\title{LU factorization with errors\footnote{This work is partly
    funded by \OpenDreamKit~and~\GACI.}}
\author{Jean-Guillaume Dumas\footnote{
  {Universit\'e Grenoble Alpes}.
  {Laboratoire Jean Kuntzmann, CNRS, UMR 5224}.
  {700 avenue centrale, IMAG - CS 40700},
  {38058 Grenoble, cedex 9}
  {France}.
\href{mailto:Jean-Guillaume.Dumas@univ-grenoble-alpes.fr,Clement.Pernet@univ-grenoble-alpes.fr}{\{firstname.lastname\}@univ-grenoble-alpes.fr}}
\and Joris van der Hoeven\footnote{
  {CNRS}.
  {Laboratoire d'informatique de l'\'Ecole polytechnique}.
  {1, rue Honor\'e d'Estienne d'Orves, CS35003}.
  {LIX, UMR 7161 CNRS, 91120 Palaiseau, France}.
  \href{mailto:vdhoeven@lix.polytechnique.fr}{vdhoeven@lix.polytechnique.fr}}
\and Cl\'ement Pernet\footnotemark[2]
\and Daniel S. Roche\footnote{
  {United States Naval Academy}.
  {Annapolis, Maryland},
  {U.S.A.}
  \href{mailto:roche@usna.edu}{roche@usna.edu}}
}
\begin{document}
\maketitle

\begin{abstract}
We present new algorithms to detect and correct errors in the 
lower-upper factorization of a matrix, or the triangular linear system
solution, over an arbitrary field.
Our main algorithms do not require any additional information or
encoding other than the original inputs and the erroneous output.
Their running time is softly linear in the dimension times the number of
errors when there are few errors, smoothly growing to the cost of
fast matrix multiplication as the number of errors increases.
We also present applications to general linear system solving.
\end{abstract}

\section{Introduction}
The efficient detection and correction of computational errors is an
increasingly important issue in modern computing. Such errors can result from
hardware failures, communication noise, buggy software, or even
malicious servers or communication channels.

The first goal for fault-tolerant computing is
verification. Freivalds
presented a linear-time algorithm to verify the correctness of a single matrix
product~\cite{Freivalds:1979:certif}. Recently, efficient verification algorithms for a wide range of
computational linear algebra problems have been
developed~\cite{Kaltofen:2011:quadcert,jgd:2014:interactivecert,jgd:2016:gammadet,jgd:2017:rpmcert}.

Here we go further and try to correct any errors that arise.
This approach is motived by the following scenarios:

\paragraph{Large scale distributed computing}
In high-performance computing, the failure of
some computing nodes (fail stop) or
the corruption of some bits in main memory by cosmic radiation (soft
errors) become relevant.
The latter type can be handled by introducing redundancy and
applying classical error correction, either at the hardware
level (e.g., with ECC RAM), or integrated within the computation algorithm,
as in many instances of Algorithm Based Fault Tolerance
(ABFT)~\cite{HuAb84,Du:2011:softerrors,Davies:2013:CSEOLUF,Bouteiller:2015:ABFTLUQR}.  

\paragraph{Outsourcing computation}
When running some computation on one or several third-party
servers, incorrect results may originate from either failure or malicious
corruption of some or all of the third parties. The result obtained is then
considered as an approximation of the correct result.

\paragraph{Intermediate expression swell}
It often happens that symbolic computations suffer from requiring large
temporary values,
even when both the input and output of a problem are small or sparse.
It can then be more efficient to
use special methods that are able to determine or guess the sparsity pattern
and exploit this in the computation. Sparse polynomial and rational
function interpolation has
been developed \cite{Pro1795,BenOrTiwari1988,KaltofenYagati1988,Roche2018}
as such a technique. The connection with
error correction comes from the fact that we may regard a sparse
output as the perturbation of some trivial approximate solution, such as zero
or the identity matrix.

\paragraph{Fault-tolerant computer algebra}
Some recent progress on
fault tolerant algorithms has been made for Chinese
remaindering~\cite{Goldreich:2000:crterrors,Khonji:2010:ODRRS,Bohm:2015:badprimes},
system solving~\cite{Boyer:2014:NLSPEEC,Kaltofen:2017:ETPLSS}, matrix
multiplication and
inversion~\cite{Pagh:2013:compressed,Gasieniec:2017:freivalds,Roche:2018:ECFMMI},
and function recovery~\cite{Comer:2012:SPIBM,Kaltofen:2013:SMFR}.

\subsection{Our setting}

In this paper, we focus on LU-factorization and system solving. We will
assume the input (such as a matrix $A$ to be factored) is known, as well
as an \emph{approximate output} (such as a candidate LU-factorization)
which may contain errors. We seek efficient algorithms to
recover the correct outputs from the approximate ones.

Our work can
be seen as part of a wider effort to understand how techniques for fault tolerance
without extra redundancy extend beyond basic operations such as matrix multiplication. It
turns out that the complexities for other linear algebra problems are
extremely sensitive to the precise way they are stated. For instance, in the
case of system solving, the complexities for error-correction are quite
different if we assume the LU-factorization to be returned along with the
output, or not. The LU-factorization process itself is very sensitive to
pivoting errors; for this reason we will assume our input matrix to admit a
generic rank profile~(GRP).

From an information theoretic point of view,
it should be noted that all necessary input information
in order to compute the output is known to the client.
The approximate output is merely provided by the server in order to
make the problem computationally cheaper for the client.
In theory, if the client has sufficient resources, he could perform
the computation entirely by himself,
while ignoring the approximate output.
Contrary to what happens in the traditional theory of
error correcting codes, it is therefore not required to
transmit the output data in a redundant manner
(in fact, we might even not transmit anything at all).

In our setting, unlike classical coding theory, it is therefore more
appropriate to measure the correction capacity in terms of the amount
of computational work for the decoding algorithm rather than the
amount of redundant information.

We also put no {\em a priori} restriction on the number of errors:
ideally, the complexity should increase with the number of errors and
approach the cost of the computation without any approximate output when
the number of errors is maximal.

As a consequence, the error-correcting algorithms we envision
can even be used in an extreme case of the second scenario:
a malicious third party might introduce errors according to patterns
that are impossible to correct using traditional bounded distance
decoding approaches.  Concerning the third scenario,
the complexity of our error-correcting algorithms is typically sensitive
to the sparsity of a problem, both in the input {\em and\/} output.

Another general approach for error
correction is discussed in Section~\ref{sec:kns}: if we allow the server to
communicate some of the intermediate results, then many linear algebra
operations can be reduced in a simple way to matrix multiplication with error
correction.

\subsection{General outline and main results}

{\tmem{General notations.}} Throughout our paper, $\F$ stands for an effective
field with $\# \F \in \mathbb{N} \cup \{ \infty \}$ elements. We write $\F^{m
\times n}$ for the set of $m \times n$ matrices with entries in $\F$ and $\#M$
for the number of non-zero entries of a matrix $M \in \F^{m \times n}$. The
soft-Oh notation $\softoh{\ldots}$ is the same as the usual big-Oh notation
but ignoring sub-logarithmic factors: $f = \softoh{g}$ if and only if
$f{\in}O(g(\log{g})^{O(1)})$ for cost functions~$f$ and~$g$.

Let $\omega$ be a constant between $2$ and $3$ such that two $n \times n$
matrices can be multiplied using $O (n^{\omega})$.
In practice, we have $\omega = 3$ for small dimensions~$n$ and
$\omega \leqslant \log_2 7 \approx 2.81$ for larger dimensions,
using Strassen multiplication (the precise threshold depends on
the field $\F$; Strassen multiplication typically becomes interesting
for $n$ of the order of a few hundred). The best asymptotic
algorithm currently gives $\omega < 2.3728639$~\cite{LeGall:2014:fmm}.
From a practical point of view, the $\omega$ notation indicates whether
an algorithm is able to take advantage of fast low-level matrix multiplication
routines.\medskip

{\tmem{Generic solution.}} In \cref{sec:kns}, we first describe a generic strategy
for fault-tolerant computations in linear algebra. However, this strategy may
require the server to communicate certain results of intermediate
computations. In the remainder of the paper, we only consider a stronger form
of error correction, for which only the final results need to be
communicated.\medskip

{\tmem{LU-factorization.}} Let $A \in \F^{n \times n}$ be an invertible matrix
with generic rank profile (GRP). Let $\mathcal{L}, \mathcal{U} \in \F^{n \times n}$ be
(resp.) lower- and upper-triangular matrices that are ``close to'' the
LU-factorization of~$A$:
$A \approx \mathcal{L}\mathcal{U}$.
Specifically, suppose there exist sparse upper- and lower-triangular error
matrices $E, F \in \F^{n \times n}$ with
\[ A = (\mathcal{L} + E)  (\mathcal{U} + F), \qquad \#E +\#F \leqslant k. \]
 In the following we write the (possibly) faulty matrices in
 calligraphic fonts.
We specify that $\mathcal{L}$ has 1's on the diagonal, so that $E$ is strictly
lower-triangular whereas $F$ may have corrections on the diagonal or above it.

Given $A, \mathcal{L}, \mathcal{U}$, the goal is to determine the true
$L=\mathcal{L}+E$ and $U=\mathcal{U}+F$ as efficiently as
possible. Our main result is a probabilistic Monte Carlo algorithm for doing
this in time
\[ \softoh{t + \min\{kn, k^{\omega-2}n^{4-\omega}\}}, \]
for any constant probability of failure,
where $t$ is the number of nonzero terms in all input matrices,
$k$ is the number of errors in $\mathcal{L}$ or $\mathcal{U}$, and
$n$ is the matrix dimension (see \cref{thm:croutec}).
Here we measure the complexity in terms of
he number of operations in $\F$, while assuming that random elements in $\F$
can be generated with unit cost.
Note that because the number of errors $k$ cannot exceed the total dense
size $n^2$, this complexity is never larger than $\softoh{n^\omega}$,
which is the cost
of re-computing the LU-factorization with no approximate inputs, up to
logarithmic factors.
In \cref{sec:rect}, we also show how to generalize our
algorithm to rectangular, possibly rank deficient matrices (still under the
GRP assumption).\medskip

{\tmem{System solving.}} In \cref{sec:linsys}, we turn to the problem of solving a
linear system $XA = B$, where $A$ is as above and $B$ is also given.
Given only
possibly-erroneous solution $\mathcal{X}$ to such a system,
we do not know of any efficient method to correct the potential
errors. 
Still, we will show how errors can efficiently be corrected if we require
the server to return a few extra intermediate results.

More precisely, if $B$ has few rows, then we require an approximate
LU-factorization $A \approx \mathcal{L}\mathcal{U}$ and an approximate
solution to the triangular system $\mathcal{Y}\mathcal{U}= B$. If $B$ has many
rows (with respect to the number of columns), then we require an approximate
LU-factorization $A \approx \mathcal{L}\mathcal{U}$ and an approximate inverse
$\mathcal{R}$ of $\mathcal{U}$.
Given these data, we give algorithms to correct all errors in similar
running time as above, depending on the number of errors in all provided
inputs.

\section{Generic solution}\label{sec:kns}

In~\cite[\S~5]{Kaltofen:2011:quadcert} a generic strategy is described
for verifying the result of a linear algebra computation.
The complexity of the verification is the same as the complexity of
the actual computation under the assumption that matrix products
can be computed in time $\softoh{n^2}$.  In this section,
we show that a similar strategy can be used to correct errors.

Let $\mathcal{A}$ be any algorithm which uses matrix multiplication.
We assume that $\mathcal{A}$ is deterministic; any random bits needed in the computation should
be pre-generated by the client and included with the input.

The server
runs algorithm $\mathcal{A}$ and, each time it performs any matrix multiplication, it
adds that product to a list. This list of all intermediate products is sent back to the client.
We assume that there may be a small number of errors in any of the intermediate products,
but that these errors do not propagate; that is, the total number of erroneous entries in
any intermediate product is bounded by some $k$.

Next, to correct errors, the client also runs algorithm $\mathcal{A}$, except that every
time it needs to multiply matrices, it performs error correction instead, using the
intermediate result sent by the server. All other operations (additions, comparisons, etc.)
are performed directly by the client.

The total cost for the server is the same as the normal computation of $\mathcal{A}$ without
error correction. The cost for the client,
as well as the communication, is the cost that the algorithm \emph{would have}
if matrix multiplication could be performed in $O(n^2)$ time.

In particular, typical block matrix algorithms such as LU factorization
admit a worst case complexity of $\softoh{n^\omega}$ for the server and
only $\softoh{n^2}$ for the client (including communications).

The goal of this paper is to perform better, for instance when $k$ is
a bound on the number of errors only in $L$ and $U$ and not of all
intermediate computations.

\section{Block recursive algorithm}

We will now describe an error cleaning algorithm emulating the steps of an LU
decomposition algorithm, where each computation task is replaced by an error
cleaning task.

The cleaning algorithm to be used needs to satisfy the following properties:
\begin{compactenum}
\item \label{cond:block} it has to be a block algorithm, gathering most operations into
  matrix multiplications where error cleaning can be efficiently performed by
  means of sparse interpolation, as in~\cite{Pagh:2013:compressed,Roche:2018:ECFMMI};
\item \label{cond:recursive} it has to be recursive in order to make
  an efficient usage of fast matrix multiplication.
\item \label{cond:nointermediate} each block operation must be between
  operands that are submatrices of
  either the input matrix $A$ or the approximate $\mathcal{L}$ and $\mathcal{U}$
  factors. Indeed, the only data available for the error correction are these
  three matrices: the computation of any intermediate results that cannot directly
  be extracted from these matrices would be too expensive for achieving
  the intended complexity bounds.
\end{compactenum}

In the large variety of LU decomposition algorithms, iterative and block
iterative algorithms range in three main categories depending on the scheduling
of the update operation (see~\cite[\S~5.4]{DDSV98} and \cite{DGPS14} for a
description): right-looking, left-looking and Crout.

The right-looking variant updates the trailing matrix
immediately after a pivot or a block pivot is found, hence generating
blocks containing intermediate results, not
satisfying~(\ref{cond:nointermediate}). Differently, the left-looking and
the Crout variants proceed by computing directly the output coefficients in $L$
and $U$ following a column shape (left-looking) or arrow-head (Crout) frontier.
\begin{figure}[htb]
  \includegraphics[width=.25\columnwidth]{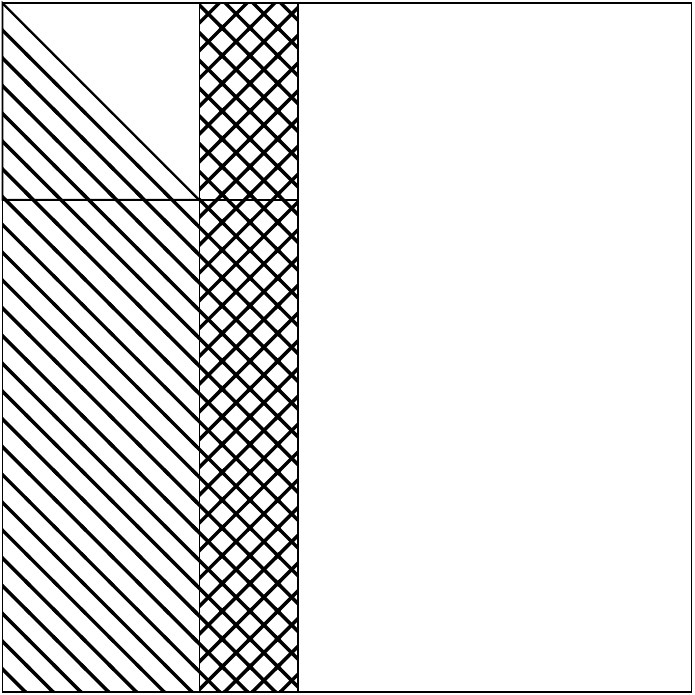}\hfill
  \includegraphics[width=.25\columnwidth]{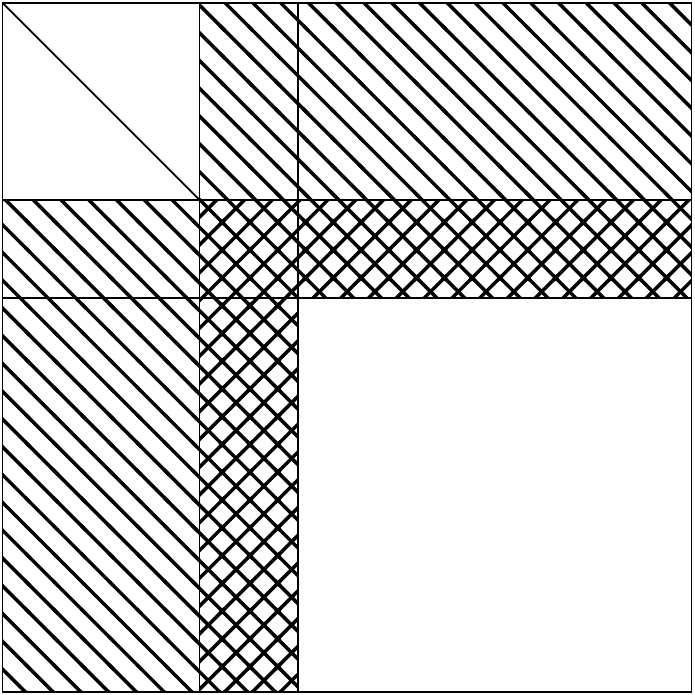}\hfill
  \includegraphics[width=.25\columnwidth]{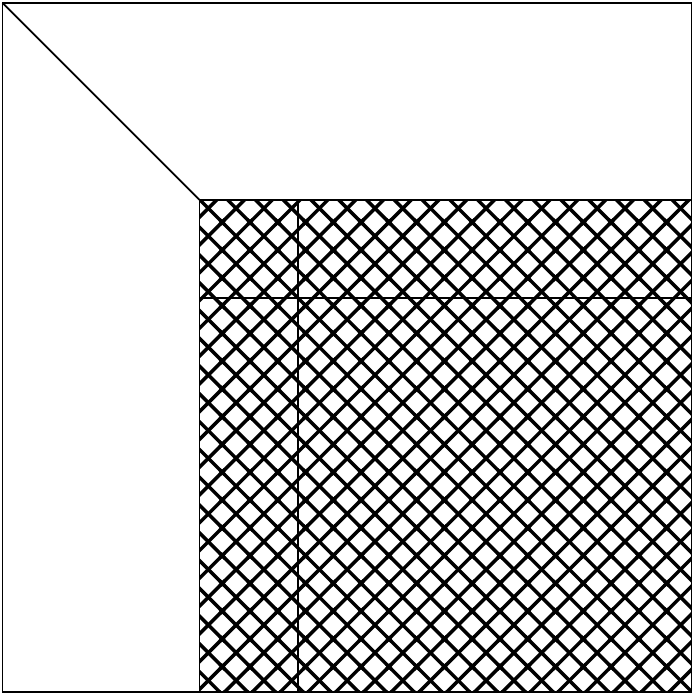}
  \caption{Access pattern of the left-looking (left), Crout (center) and
    right-looking variants of an LU factorization. Diagonal stripes represent
    read-only accesses, crossed stripes read-write accesses.}
  \label{fig:LLCroutRL}
  \end{figure}
Figure~\ref{fig:LLCroutRL} summarizes these 3 variants by exposing the memory
access patterns of one iteration.

The left-looking and the Crout schedules consist in delaying the computation of the Gauss updates
until the time where the elimination front deals with the location under
consideration. This is precisely satisfying
Condition~(\ref{cond:nointermediate}).
However these two schedules are usually described in an iterative or block
iterative setting. To the best of our knowledge, no recursive variant has been proposed so far.
We introduce in Section~\ref{sec:RecursiveCrout} a recursive Crout LU
decomposition algorithm on which we will base the error-correction algorithm of
section~\ref{sec:LUEC}. Interestingly, we could not succeed in writing a
recursive version of a left-looking algorithm preserving
Condition~(\ref{cond:nointermediate}).

\emph{We emphasize that, although our eventual error correction algorithm
will follow the recursive Crout variant, no assumption whatsoever
is made on the algorithm used (e.g., by a remote server)
to produce the approximate LU decomposition used as input.}

\subsection{Recursive Crout LU decomposition}
\label{sec:RecursiveCrout}

Algorithm~\ref{alg:crout} is a presentation of a Crout recursive variant of Gaussian elimination,
for a generic rank profile (GRP) matrix.
It incorporates the delayed update schedule of the classical block iterative
Crout algorithm~\cite{DDSV98,DGPS14} into a recursive algorithm dividing both
row and column dimensions.
Note that due to the delayed updates, the recursive algorithm need to be aware
of the coefficients in $L$ and $U$ previously computed, hence the entirety of
the working matrix has to be passed as input, contrarily to most recursive
algorithms~\cite{BuHo74, DPS:2013}.
In a normal context,
this algorithm could work in-place, overwriting the input matrix $A$
with both factors $L$ and $U$ as the algorithm proceeds.

In our error-correcting context, the input will eventually
consist not only of $A$ but also of the approximate $\mathcal{L}$ and
$\mathcal{U}$.  Therefore, in \cref{alg:crout} below,
we treat the matrix $A$ as an unmodified input and
fill in the resulting $L$ and $U$ into a separate matrix $M$.
In Algorithm~\ref{alg:croutec}, this matrix $M$ will initially contain the
approximate $\mathcal{L}$ and $\mathcal{U}$ which will be overwritten by the
correct $L$ and $U$.

The main work of the Crout decomposition consists of
dot products (in the base case), matrix multiplications and triangular solves.
We define TRSM as a triangular system solve with matrix
right-hand side, with some variants depending whether the triangular
matrix is lower ('L') or upper ('U') triangular, and
whether the triangular matrix
is on the left ('L', for $X\leftarrow T^{-1} A$)
or on the right ('R', for for $X\leftarrow A T^{-1}$).
These always work in-place, overwriting the right-hand side with the
solution.
For instance:
\begin{itemize}
\item $\Trsm{Upper}{Right}(A,U)$ is a right solve with upper-triangular
$U$ which transforms $A$ to $A'$ such that $A'U=A$.
\item $\Trsm{Lower}{Left}(L,B)$ is a left solve with lower-triangular
$L$ which transforms $B$ to $B'$ such that $LB'=B$.
\end{itemize}

In the algorithms, we use the subscript $\rest$ to denote ``indices 2
and 3'', so for example $n_\rest = n_2+n_3$ and
\[ A_{\rest\rest} =
  \begin{cmatrix}[c;{2pt/2pt}c]
    A_{22} & A_{23} \\\hdashline[2pt/2pt] A_{32} & A_{33}
  \end{cmatrix}.
\]

\begin{algorithm}
  \caption{\texttt{Crout}$(M,A_{\rest\rest},n_1,n_\rest)$}\label{alg:crout}
  \begin{algorithmic}[1]
    \Require{$M=\begin{cmatrix}[c|c] L_{11} \backslash U_{11} & U_{1\rest}
      \\   \hline L_{\rest 1} & \end{cmatrix}$ is
      $(n_1+n_\rest)\times(n_1+n_\rest)$\medskip}
    \Require{$L_{11}$ (resp. $U_{11}$)
    is unit lower (resp. upper) triangular}
    \Require{$A_{\rest\rest}$ is $n_\rest \times n_\rest$}
      \medskip
    \Require{
      $A = \begin{bmatrix} L_{11}U_{11} & L_{11}U_{1\rest}\\
        L_{\rest 1}U_{11} & A_{\rest\rest} \end{bmatrix}$ has GRP\medskip
    }
    \Ensure{$M= \begin{bmatrix} L \backslash U \end{bmatrix}$ such that
      $A=L{\cdot}U$\medskip
    }
    
    \If{$n_\rest=1$}
    \State\label{lin:croutdotprod}
      $U_{\rest\rest} \leftarrow A_{\rest\rest}-L_{\rest 1}{\cdot}U_{1\rest}$ \Comment{dot product}
      \newline
      \null\hfill
      \Comment{$L_{\rest\rest}$ is implicitly $[1]$}
    \Else
    \State Decompose $n_\rest = n_2 + n_3$
      with $n_2= \lceil{}n_\rest/2\rceil$, $n_3=\lfloor{}n_\rest/2\rfloor{}$
    \medskip

    \State Split
    $ M=\begin{cmatrix}[c|c;{2pt/2pt}c]
      L_{11}\backslash U_{11} & U_{12} & U_{13}\\
      \hline
      L_{21} &  & \\
      \hdashline[2pt/2pt]      L_{31} &  &
    \end{cmatrix}$
    \medskip

    \State Split
      \(A_{\rest\rest} =
        \begin{cmatrix}[c;{2pt/2pt}c]
          A_{22} & A_{23} \\\hdashline[2pt/2pt] A_{32} & A_{33}
        \end{cmatrix}
      \)
      \medskip

    \State\label{lin:croutrec1} $\text{\texttt{Crout}}\left(
      \begin{cmatrix}[c|c]L_{11}\backslash U_{11} & U_{12}\\\hline L_{21} & \end{cmatrix},
      A_{22}, n_1,n_2\right)$
    \medskip
      
    \State Here 
    $ M=\begin{cmatrix}[c|c;{2pt/2pt}c]
      L_{11}\backslash U_{11} & U_{12} & U_{13}\\
      \hline
      L_{21} & L_{22}\backslash U_{22}& \\
      \hdashline[2pt/2pt]      L_{31} & & 
    \end{cmatrix}$
    \medskip

    \State\label{lin:croutgemm1} $U_{23} \leftarrow A_{23}-L_{21}U_{13}$
    \State\label{lin:croutulltrsm}
    $\text{\Trsm{Lower}{Left}}(U_{23}, L_{22})$
      \Comment{$L_{22}U_{23} = A_{23}-L_{21}U_{13}$}
    \State\label{lin:croutgemm2} $L_{32} \leftarrow A_{32}-L_{31}U_{12}$
    \State\label{lin:crouturtrsm}
      $\text{\Trsm{Upper}{Right}}(L_{32},U_{22})$
    \Comment{$L_{32}U_{22} = A_{32}-L_{31}U_{12}$}
    \medskip

    \State Here $M=\begin{cmatrix}[cc|c]
        L_{11}\backslash U_{11} & U_{12} & U_{13}\\
        L_{21} & L_{22}\backslash U_{22} & U_{23}\\
        \hline
        L_{31} & L_{32} &
      \end{cmatrix}$
    \medskip
    
    \State $\text{\texttt{Crout}}
    \left(M, A_{33}, n_1+n_2,n_3\right) $
    \medskip

    \State Here $M=\begin{bmatrix}
      L_{11}\backslash U_{11} & U_{12} & U_{13}\\
      L_{21} & L_{22}\backslash U_{22} & U_{23}\\
      L_{31} & L_{32} & L_{33}\backslash U_{33} \end{bmatrix}$
    \EndIf
   \end{algorithmic}
\end{algorithm}

\begin{theorem}\label{thm:crout}
$Crout(M,A,0,n)$ overwrites $M$ with a complete LU factorization of
$A\in\F^{n{\times}n}$ in $O(n^\omega)$ operations.
\end{theorem}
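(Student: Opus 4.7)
I would prove the two claims separately. Correctness goes by strong induction on $n_\rest$. The base case $n_\rest=1$ is immediate: the line $U_{\rest\rest}\leftarrow A_{\rest\rest}-L_{\rest 1}U_{1\rest}$ is exactly the Schur-complement dot product for the last diagonal entry, and the implicit $L_{\rest\rest}=[1]$ then satisfies the required identity $A=L\cdot U$ for the augmented matrix.

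In the recursive case, I would first verify the precondition of the first recursive call $\texttt{Crout}(\cdot,A_{22},n_1,n_2)$. By the hypothesis on the outer call, the augmented matrix supplied to it is the leading $(n_1+n_2)\times(n_1+n_2)$ principal submatrix of $A$, which inherits GRP; hence the induction hypothesis fills $L_{22},U_{22}$ into $M$ with $L_{22}U_{22}=A_{22}-L_{21}U_{12}$, and both of these triangular factors are invertible. This validates the two triangular solves on \cref{lin:croutulltrsm,lin:crouturtrsm}, and after \cref{lin:croutgemm1,lin:croutulltrsm,lin:croutgemm2,lin:crouturtrsm} the new blocks satisfy $L_{22}U_{23}=A_{23}-L_{21}U_{13}$ and $L_{32}U_{22}=A_{32}-L_{31}U_{12}$, which are exactly the block LU relations of $A$. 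A direct block computation then shows that the updated $M$, together with the input $A_{33}$, satisfies the precondition of the second recursive call $\texttt{Crout}(M,A_{33},n_1+n_2,n_3)$, and the induction hypothesis finishes the factorization.

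For the complexity, I would set up the recurrence
\begin{equation*}
  f(n_1,n_\rest) \;=\; f(n_1,n_2) + f(n_1+n_2,n_3) + w(n_1,n_\rest),
\end{equation*}
with base case $f(n_1,1)=O(n_1)$, where $w(n_1,n_\rest)$ bounds the non-recursive work: two rectangular products of shape $(n_\rest/2)\times n_1\times(n_\rest/2)$ (for $L_{21}U_{13}$ and $L_{31}U_{12}$) and two triangular solves with triangular factors of size $n_\rest/2$. Using the standard cost estimates for these operations under fast matrix multiplication and summing along the recursion tree reproduces the familiar analysis of block LU, yielding $f(0,n)=O(n^\omega)$.

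The main obstacle I anticipate is not any individual step but the bookkeeping for GRP along the recursion: one must check at each level that the virtual augmented matrix handed to the recursive call is really a principal submatrix (or an appropriately reconstructed version) of the original GRP matrix $A$, and thus inherits GRP. Once that inheritance is in place, the correctness of the two block identities and the invertibility of the intermediate triangular factors follow at once, and the complexity bound reduces to the classical block LU recurrence.
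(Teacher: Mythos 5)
Your proposal is correct and follows essentially the same route as the paper's own proof: induction on $n_\rest$ verifying the block identities $L_{22}U_{22}=A_{22}-L_{21}U_{12}$, $L_{22}U_{23}=A_{23}-L_{21}U_{13}$, $L_{32}U_{22}=A_{32}-L_{31}U_{12}$, followed by the standard block-LU cost recurrence. The only difference is that you make explicit the GRP inheritance and the invertibility of $L_{22},U_{22}$ needed for the triangular solves, which the paper leaves implicit, and the paper delegates the $O(n^\omega)$ recurrence to a citation rather than writing it out.
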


\begin{proof}
Correctness is proven by induction on $n_\rest$.
For $n_\rest=1$ we have
\[
\begin{bmatrix} L_{11}&0\\ L_{2}&1\end{bmatrix}
\begin{bmatrix} U_{11}&U_{1\rest}\\0&A_{\rest\rest}-L_{\rest 1}U_{1\rest}\end{bmatrix}=
\begin{bmatrix} L_{11}U_{11}&L_{11}U_{1\rest}\\L_{\rest 1}U_{11}&A_{\rest\rest}\end{bmatrix}.
\]
For $n_\rest > 1$, we have
$$LU=
\begin{bmatrix}
L_{11}U_{11}&L_{11}U_{12}& L_{11}U_{13}\\
L_{21}U_{11}&L_{21}U_{12}+L_{22}U_{22}&L_{21}U_{13}+L_{22}U_{23}\\
L_{31}U_{11}&L_{31}U_{12}+L_{32}U_{22}&L_{31}U_{13}+L_{32}U_{23}+L_{33}U_{33}
\end{bmatrix}.
$$
After the first recursive call in step~\ref{lin:croutrec1},
we have $L_{21} U_{12} + L_{22} U_{22} = A_{22}$.
In step~\ref{lin:croutulltrsm}, we ensured that
$A_{23}=L_{22}U_{23}+L_{21}U_{13}$.
Similarly, $A_{32}=L_{32}U_{22}+L_{31}U_{12}$ from
step~\ref{lin:crouturtrsm}.
Finally, after the second recursive call,
we have $A_{33}=L_{31}U_{13}+L_{32}U_{23}+L_{33}U_{33}$.
This concludes the proof that $A=LU$ at the end of the algorithm.

The complexity bound stems from the fact that the dot products
in step~\ref{lin:croutdotprod} cost $O(n^2)$ overall,
and that the matrix multiplications and system
solves require $O(n^\omega)$~\cite{jgd:2008:toms}.
\end{proof}

\subsection{Error-correcting triangular solves}

The cost of \cref{alg:crout} is dominated by matrix-matrix products of
off-diagonal blocks of $L$ and $U$ and triangular solving in
steps~\ref{lin:croutgemm1}--\ref{lin:crouturtrsm}.
The first task in adapting this algorithm for error correction is to
perform error correction in this triangular solving step, treating the
right-hand side as an unevaluated \emph{black box matrix} in order to
avoid the matrix-matrix multiplication.

We explain the process to correct errors in $\mathcal{U}_{23}$
(an equivalent process works in the transpose for $\mathcal{L}_{32}$):
recall steps~\ref{lin:croutgemm1} and \ref{lin:croutulltrsm}
of \cref{alg:crout}:
 $U_{23} \leftarrow A_{23}-L_{21}U_{13}$
and
$\text{\Trsm{Lower}{Left}}(U_{23}, L_{22})$%
.
Mathematically, these steps perform the computation:
$U_{23} \gets L_{22}^{-1}\left(A_{23}-L_{21}U_{13}\right)$.
At this point in the error correction,
$A_{23}$ is part of the original input matrix while
$L_{21}$, $U_{13}$, and $L_{22}$ have already been corrected by the recursive
calls.

The idea is to be able to correct the next parts of an
approximate $\mathcal{U}$,
namely $\mathcal{U}_{23}$, without recomputing it.
\cref{alg:trsmec} below does this following the
approach of \cite{Roche:2018:ECFMMI}:
for $k\le mn$ total errors within $c\le n$ erroneous columns, less than
$c/2$ columns can have more than $s=\lfloor 2k/c \rfloor$ errors.
Therefore, we start with a candidate for $k$, then try to correct $s$
errors per erroneous column. If $k$ is correct, then they will be
corrected in fewer than $\log(c)\leq\log(n)$ iterations.
If fewer than $c/2$ columns are corrected on some step, this indicates
that the guess for $k$ was too low, so we double the guess for $k$ and
continue.  This is shown in
Algorithm~\ref{alg:trsmec}, where as previously mentioned, a normal
font is an already correct matrix block and a calligraphic font denotes
a matrix block to be corrected. We also recall that $\nnz{X}$
stands for the number of nonzero elements in $X$,
and define $\texttt{ColSupport}$ of a matrix $M$
to be the indices of columns of $M$ with any nonzero entries.

Due to the crucial use of sparse interpolation, we require a high-order
element $\theta$ in the underlying field $\F$. If no such $\theta$ exists,
we simply replace $\F$ by an extension field.  The cost of computing
in such an extension field will only induce a logarithmic overhead.

\begin{algorithm}
  \caption{\TrsmEC{Upper}{Right}$(\mathcal{R}, H, U, m, \ell, n,\varepsilon)$}\label{alg:trsmec}
  \begin{algorithmic}[1]
    \Require{$\mathcal{R}$ is $m\times n$}
    \Require{$H$ is $m\times n$ presented as an unevaluated blackbox $H=C-AB$
      where the inner dimension between $A$ and $B$ is $\ell$}
    \Require{$U$ is $n\times n$ invertible upper triangular}
    \Require{Failure bound $0<\varepsilon<1$.}
    \Ensure{$\mathcal{R}$ is updated in-place s.t. $\Pr[\mathcal{R}U=H] \ge 1-\varepsilon$\medskip}

    \State $k\leftarrow 1$ \Comment{how many errors exist}
    \State $k' \leftarrow 0$
    \Comment{how many errors have been corrected}
    \State \label{trsmec:algext}\textbf{if} $m \geq \#\F$ \textbf{then}
      $\F \leftarrow \mathbb{F}_{q^\nu}$ \, ($q=\#\F, \nu = \lceil \log_q (m+1) \rceil$) \textbf{end if}
    \State Pick $\theta$ of order $\ge m$ in $\F$, with precomputed $(\theta^{j})_{0\leq j < m}$
    \State $\lambda \leftarrow\left\lceil\log_{\#\F}\left({3n\log_2 n}/{\varepsilon}\right)\right\rceil$
    \State $c'\leftarrow 2n$
    \State $E \leftarrow 0^{m{\times}n}$
    \Repeat
    \State Pick $W\in \F^{\lambda \times m}$ uniformly at random.
    \State $X \gets WC - (WA)B$ \Comment{$X = WH$}
      \label{trsmec:colsup0}
    \State\label{trsmec:urtrsmone} \Trsm{Upper}{Right}$(X,U)$ \Comment{$X = WHU^{-1}$}
    \State $(j_1,\dots, j_{c})\leftarrow \texttt{ColSupport}(X - W\mathcal{R} - WE)$
      \newline \null\hfill \Comment{columns of $(\mathcal{R}+E)$ with errors}
      \label{trsmec:colsup}
    \State Clear any entries of $E$ from columns $j_1,\ldots,j_c$
    \State Update $\mathcal{R} \leftarrow \mathcal{R} + E$,\quad
      $k' \leftarrow k' + \nnz{E}$

    \State \textbf{if} $c > c'/2$ \textbf{then} $k \gets \max(2k,c)$
      \textbf{end if}\newline
      \null\hfill \Comment{too many errors; $k$ must be wrong}

    \State $c'\leftarrow c$
    \State $s\leftarrow \min\left(n, \left\lceil 2\frac{k-k'}{c}\right\rceil\right)$
    \State $V\leftarrow (\theta^{ij})_{0\leq i< 2s, 0\leq j < m}$
    \Comment{unevaluated}
    \State $P\leftarrow \begin{bmatrix} e_{j_1} & \dots & e_{j_c} \end{bmatrix}$
    \Comment{selects erroneous cols.\ of $\mathcal{R}$}
    \State $G \gets V(CP) -(VA)(BP) - (V\mathcal{R})(UP)$
    \State\label{trsmec:urtrsmtwo} \Trsm{Upper}{Right}$(G,P^\intercal U P)$
      \Comment{$GP^\intercal UP = V(H - \mathcal{R}U)P$}
    \State Find $S\in\F^{m{\times}c}$ s.t.\ $VS = G$ by sparse interpolation
    \State $E\leftarrow SP^{\intercal}$
    \Until{$c=0$}
  \end{algorithmic}
\end{algorithm}

\begin{theorem}\label{thm:trsmec}
  For a failure bound $0<\varepsilon<1$,
  $A\in\F^{m{\times}\ell}$, $B\in\F^{\ell{\times}n}$,
  $C\in\F^{m{\times}n}$, $U\in\F^{n{\times}n}$,
  $\mathcal{R}\in\F^{m{\times}n}$, 
  with total non-zero entries 
  $$t=\max\{\nnz{A},\nnz{B},\nnz{C},\nnz{U},\nnz{\mathcal{R}}\}\le(m+n)(\ell+n),$$
  and $k$ errors in $\mathcal{R}$,
  Algorithm~\ref{alg:trsmec} is correct and runs in time
  \[
  \softoh{(t+k+m)(1 + \log_{\#\F}\tfrac{1}{\varepsilon})+
    \max\{n,\ell\}\cdot \min\{k, k^{\omega-2} n^{3-\omega}\} }.
  \]
\end{theorem}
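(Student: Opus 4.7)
The plan is to separate the argument into (i) a loop invariant establishing that each iteration makes correct progress whenever the current guess $k$ is at least an upper bound on the true number of remaining errors; (ii) a doubling argument showing that this becomes true within $O(\log n)$ iterations; (iii) a union-bound failure analysis fixing the choice of $\lambda$; and (iv) a per-iteration cost accounting. I would first unroll the algebra at the top of the main loop: after Line~\ref{trsmec:urtrsmone}, $X = W H U^{-1}$, so $X - W\mathcal{R} - WE = W D$ where $D = \mathcal{R}^\star - \mathcal{R} - E$ is the true remaining error with $\mathcal{R}^\star U = H$. The \texttt{ColSupport} step therefore flags exactly the nonzero columns of $W D$.

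For the random projection, a uniformly random $W \in \F^{\lambda \times m}$ sends any fixed nonzero column of $D$ to zero with probability at most $(\#\F)^{-\lambda}$, working in $\F_{q^\nu}$ when $\#\F \leq m$ at only a logarithmic overhead. A union bound over the $\leq n$ candidate columns and the $O(\log n)$ iterations established below, together with the algorithm's choice of $\lambda$, yields total failure probability $\leq \varepsilon$. Under this good event, the column support of $D$ is revealed exactly at every iteration.

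Next I would handle the Vandermonde reconstruction. With $s = \lceil 2(k-k')/c \rceil$, a simple counting argument shows at most $c/2$ of the flagged columns carry more than $s$ errors. For each of the remaining columns $j$, the system $V S_{\cdot j} = G_{\cdot j}$ has $2s$ rows, and its nodes $1, \theta, \ldots, \theta^{m-1}$ are distinct because $\theta$ has order $\geq m$; hence the unique $s$-sparse preimage is recovered by Prony-style sparse interpolation in $\softoh{s + m}$ time per column. Thus at least $c/2$ erroneous columns are fully corrected whenever $k$ is a valid upper bound. If fewer than $c/2$ columns get corrected, the guess gets doubled; since the true error count is at most $mn$, at most $O(\log(mn))$ doublings occur before $k$ is valid, after which the erroneous-column count halves at each step, closing within $O(\log n)$ further iterations.

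For the cost, each iteration pays $\softoh{\lambda\, t}$ for the sparse products $WC$, $WA$, $(WA)B$, $W\mathcal{R}$ and the $\lambda \times n$ back-substitution of Line~\ref{trsmec:urtrsmone}, plus $\softoh{(s+m)\,t}$ for the Vandermonde-by-sparse products forming $G$ via transposed polynomial evaluation, plus $\softoh{sc + mc}$ for the Prony reconstruction of $S$. Summing across iterations and using $sc \leq 2k$ yields the $\softoh{(t+k+m)(1+\log_{\#\F}(1/\varepsilon))}$ term. The second term comes from the triangular solve on Line~\ref{trsmec:urtrsmtwo}, where one chooses either an iterative solve at total cost $O(\max\{n,\ell\}\cdot k)$, or, for larger $k$, a block solve on the principal $c \times c$ submatrix $P^\intercal U P$ using fast rectangular matrix multiplication, giving the $\max\{n,\ell\}\cdot k^{\omega-2} n^{3-\omega}$ bound. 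I expect the chief difficulty to lie in this final trade-off: balancing the classical and fast-matmul regimes while preserving both the sparsity of the operands and the Vandermonde structure of $V$, and confirming that the extension-field and geometric-series overheads accumulated across all iterations do not break the stated bound.
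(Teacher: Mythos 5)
Your overall architecture matches the paper's: Freivalds-type random left projection to find the column support of the residual error, the counting argument showing at most $c/2$ flagged columns exceed $s=\lceil 2(k-k')/c\rceil$ errors, the ``halve $c$ or double $k$'' iteration bound of $O(\log n)$ rounds, and the union bound tied to the choice of $\lambda$. The correctness half of your argument is essentially the paper's.

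The gap is in the cost accounting for forming $G = V(CP)-(VA)(BP)-(V\mathcal{R})(UP)$, and in where you locate the $\max\{n,\ell\}\cdot\min\{k,k^{\omega-2}n^{3-\omega}\}$ term. You charge $\softoh{(s+m)t}$ for ``the Vandermonde-by-sparse products forming $G$'' and then attribute the entire second term of the complexity to the triangular solve on line~\ref{trsmec:urtrsmtwo}. Both steps are off. First, $\softoh{(s+m)t}$ is not within the claimed bound: with $s$ as large as $n$ and $t$ as large as $(m+n)(\ell+n)$ this can reach $\Theta(n^3)$ in a regime where the theorem promises $\softoh{n^{\omega}}$ or less; only the genuinely Vandermonde-structured products $V(CP)$, $VA$, $V\mathcal{R}$ admit the transposed-evaluation bound $\softoh{t+k+m}$, while $(VA)(BP)$ and $(V\mathcal{R})(UP)$ are dense rectangular products ($2s\times\ell$ by $\ell\times c$, and $2s\times n$ by $n\times c$) that must be charged $O(s\ell c/\min\{s,\ell,c\}^{3-\omega})$ and $O(snc/\min\{s,n,c\}^{3-\omega})$ via fast matrix multiplication. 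Treating them as sparse products caps you at the classical $O(s\ell c)=O(k\max\{n,\ell\})$ bound and forfeits the $k^{\omega-2}n^{3-\omega}$ branch of the $\min$, which is the smaller one precisely when $k$ is large. Second, the triangular solve on line~\ref{trsmec:urtrsmtwo} acts on the $c\times c$ matrix $P^{\intercal}UP$ (the submatrix indexed by the erroneous columns, not the principal one) and a $2s\times c$ right-hand side; its cost $O(sc^2/\min\{s,c\}^{3-\omega})$ never sees $\ell$, so it cannot be the source of the $\max\{n,\ell\}$ factor --- that factor enters only through the inner dimension $\ell$ of $(VA)(BP)$. The paper closes the argument by bounding all three of these costs uniformly by $O(Nk/\min\{s,c\}^{3-\omega})$ with $N=\max\{n,\ell\}$, using $sc=O(k)$, and then instantiating $\min\{s,c\}\ge 1$ for the $O(Nk)$ branch and $\min\{s,c\}\ge k/n$ for the $O(Nk^{\omega-2}n^{3-\omega})$ branch; you need this step, applied to the two dense products as well as to the solve, to reach the stated bound.
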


\begin{proof}
  Without loss of generality, we may assume that $m < \#\F$
  in step~\ref{trsmec:algext}. Indeed, in the contrary case,
  arithmetic in $\mathbb{F}_{q^\nu}$ is only $\softoh{\nu}=\softoh{\log m}$
  times more expensive than arithmetic in $\mathbb{F}_q$,
  which is absorbed by the soft-Oh of the claimed complexity bound.
  
  Define $R$ as the correct output matrix such that
  $RU=H$ and consider the beginning of some iteration through the loop.
  \Cref{trsmec:colsup} computes with high probability the column support
  of the remaining errors $R-(\mathcal{R}+E)$, viewing this as a
  blackbox $HU^{-1}-\mathcal{R}-E$.
  We project this blackbox on the left with a block of vectors $W$
  using a Freivalds check~\cite{Freivalds:1979:certif}.

  Hence $P$ is a $n\times{}c$ submatrix of a permutation matrix selecting the
  erroneous columns of $\mathcal{R}$. Selecting the same rows and
  columns in $U$ yields a $c\times c$ matrix $P^\intercal UP$ that
  is still triangular and invertible.
  With this, one can form a new blackbox
  $$S=(R-\mathcal{R})P=(H-\mathcal{R}U)\cdot P\cdot (P^{\intercal}UP)^{-1},$$
  using the fact that $(R-\mathcal{R})PP^\intercal=(R-\mathcal{R})$.
  The columns of this blackbox $S$ are viewed as
  $c$ sparse polynomials whose evaluation at powers of $\theta$ are
  used to recover them via sparse interpolation.

  Note that only the columns with at most $s$ nonzero entries are
  correctly recovered by the sparse interpolation, so some columns of
  $S$ may still be incorrect. However, any incorrect ones are discovered
  by the Freivalds check in the next round and never incorporated
  into~$\mathcal{R}$.
  From the definition of $s$, and the fact that sparse interpolation
  works correctly for all $s$-sparse columns of $R-\mathcal{R}$, we know
  that every iteration results in either $c$ reducing by half, or $k$
  doubling. Therefore the total number of iterations is at most
  $\log_2 c + \log_2 k \le (1+2)\log_2 n = 3 \log_2 n$.

  According to \cite[Lemma~4.1]{Roche:2018:ECFMMI}, the probability of
  failure in each Frievalds check in step~\ref{trsmec:colsup} is at most
  $\varepsilon/n$. By the union bound, the probability of
  failure at \emph{any} of the $\le 3 \log_2 n$ iterations
  is therefore at most $\varepsilon$, as required.

  Now for the complexity, the calls to compute the \texttt{ColSupport}
  on \crefrange{trsmec:colsup0}{trsmec:colsup} are all performed using
  sparse matrix-vector operations, taking $O({\lambda}(t+k))$.
  From~\cite[Lemma~6.1]{Roche:2018:ECFMMI}, the multiplication by the
  Vandermonde $2s\times n$ matrix $V$
  in $V(CP)$, $VA$, and $V\mathcal{R}$ all take
  $\softoh{t + k + m}$ operations.

  The recovery of $S$ by batched multi-sparse interpolation
  takes $\softoh{sc+m\log m}= \softoh{k+m}$
  operations~\cite[Theorem~5.2]{Roche:2018:ECFMMI}.

  What remains are the cost of computing the following:
  \begin{itemize}
    \item The product $(VA)(BP)$, which costs
      \(O(s\ell c / \min\{s,\ell,c\}^{3-\omega})\)
      using fast matrix multiplication.
    \item The product $(V\mathcal{R})(UP)$, which costs
      \(O(snc/\min\{s,n,c\}^{3-\omega}).\)
    \item The subroutine \Trsm{Upper}{Right}$(G,P^\intercal U P)$,
      which costs
      the same as it would be to multiply $G$ times $P^\intercal UP$,
      \(O(sc^2/\min\{s,c\}^{3-\omega})\).
  \end{itemize}
  From the definition of $s$ we have $sc\in O(k)$. 
  Let $N=\max\{n,\ell\}$. Since $s,c\le n$ and $n,\ell\le N$,
  all three costs are 
  \begin{equation}\label{eqn:ectrsm}
    O(Nk/\min\{s,c\}^{3-\omega}).
  \end{equation}
  Until the algorithm terminates, we always have $s,c\ge 1$, so
  \eqref{eqn:ectrsm} is at most $O(Nk)$, proving the first part
  of the $\min$ in the complexity.

  For the second part, observe that the number of erroneous columns $c$ must satisfy
  $k/n \le c \le n$, which means that
  $\min\{s,c\}\ge k/n$ by the definition of $s$.
  Then the cost in \eqref{eqn:ectrsm} is bounded above by
  $O\left(\frac{Nk}{\left(k/n\right)^{3-\omega}}\right)
    \le O\left(N \cdot k^{\omega-2} \cdot n^{3-\omega}\right)$.
\end{proof}

\begin{remark}
  Since the input matrix $U$ is not modified by the algorithm,
  some entries of $U$ may be defined implicitly --- in particular, if
  the matrix is unit diagonal and the 1's are not explicitly stored.
\end{remark}

\begin{remark}\label{rq:LLeqUR}
Transposing the algorithm, we may also correct $L\mathcal{R}=G$,
where $L$ is lower triangular:
\[
\TrsmEC{Lower}{Left}(\mathcal{R}, G, L, n, \ell, m, \varepsilon)
  =\left(\TrsmEC{Upper}{Right}(\mathcal{R}^{\intercal},G^{\intercal},L^{\intercal}, m, \ell, n, \varepsilon)\right)^{\intercal}.
\]
\end{remark}

\begin{remark}\label{rq:UpperLeft}
There are two more triangular variants to consider.
Computing \TrsmEC{Lower}{Right} (or, with Remark~\ref{rq:LLeqUR},
\TrsmEC{Upper}{Left}) could be done exactly the same as in
\cref{alg:trsmec}, except that the two subroutine calls,
lines~\ref{trsmec:urtrsmone} and~\ref{trsmec:urtrsmtwo}, would be
to \Trsm{Lower}{Right}.
\end{remark}

\subsection{Correcting an invertible LU decomposition}
\label{sec:LUEC}
We now have all the tools to correct an LU decomposition.
We suppose that the matrix $A$ is non-singular and has generic rank
profile (thus there exist unique $L$ and $U$ such that $A=LU$).
We are given possibly faulty candidate matrices $\mathcal{L}$,
$\mathcal{U}$ and want to correct them: for this we run
Algorithm~\ref{alg:crout}, but replace
lines~\ref{lin:croutgemm1}-\ref{lin:crouturtrsm} by two calls to
Algorithm~\ref{alg:trsmec}. 
The point is to be able to have explicit submatrices of $A$, $L$, $U$,
$\mathcal{L}$ and $\mathcal{U}$ for the two recursive
calls (no blackbox, nothing unevaluated there), and that all the base cases
represent only a negligible part of the overall computations (the
base case is the part of the algorithm that is recomputed explicitly
when correcting).
This is presented in Algorithm~\ref{alg:croutec}.

\begin{algorithm}
  \caption{\texttt{CroutEC}$(M,A_{\rest\rest},n_1,n_\rest,\varepsilon)$}\label{alg:croutec}
  \begin{algorithmic}[1]
    \Require{$M=\begin{cmatrix}[c|c] L_{11} \backslash U_{11} & U_{1\rest}
      \\\hline  L_{\rest 1} &
      \mathcal{L}_{\rest\rest}\backslash \mathcal{U}_{\rest\rest} \end{cmatrix}$
    is $(n_1+n_\rest)\times(n_1+n_\rest)$\medskip
    }
    \Require{$L_{11}$ (resp. $U_{11}$)
    is unit lower (resp. upper) triangular}
    \Require{$\mathcal{L}_{\rest\rest},\mathcal{U}_{\rest\rest}$
      are $n_\rest\times n_\rest$ unit lower/upper triangular\medskip}
    \Require{$A_{\rest\rest}$ is $n_\rest \times n_\rest$}
    \Require{
      $A = \begin{bmatrix} L_{11}U_{11} & L_{11}U_{1\rest}\\ L_{\rest 1}U_{11} & A_{\rest\rest} \end{bmatrix}$ has GRP\medskip
    }
    \Require{Failure bound $0<\varepsilon<1$\medskip}
    \Ensure{$M= \begin{bmatrix} L \backslash U \end{bmatrix}$ such that
      $\Pr[A=L{\cdot}U] \ge 1-\varepsilon$\medskip
    }
    
    \If{$n_\rest=1$}
    \State\label{line:pivot} $U_{\rest\rest} \leftarrow A_{\rest\rest}-L_{\rest 1}{\cdot}U_{1\rest}$
    \Comment{dot product}\newline
    \hspace*{\fill}\Comment{$L_{\rest\rest}$ is implicitly $[1]$, must be correct}
    \Else
    \State Decompose $n_\rest = n_2 + n_3$
      with $n_2= \lceil{}n_\rest/2\rceil$, $n_3=\lfloor{}n_\rest/2\rfloor{}$
    \medskip
    \State Split
    $M=\begin{cmatrix}[c|c;{2pt/2pt}c]
      L_{11}\backslash U_{11} & U_{12} & U_{13}\\ \hline
      L_{21} & \mathcal{L}_{22}\backslash\mathcal{U}_{22} & \mathcal{U}_{23}
        \\ \hdashline[2pt/2pt]
      L_{31} & \mathcal{L}_{32} & \mathcal{L}_{33}\backslash\mathcal{U}_{33}
    \end{cmatrix}$\medskip
    \medskip

    \State Split
      \(A_{\rest\rest} =
        \begin{cmatrix}[c;{2pt/2pt}c]
          A_{22} & A_{23} \\\hdashline[2pt/2pt] A_{32} & A_{33}
        \end{cmatrix}
      \)
      \medskip

 \State $\text{\texttt{CroutEC}} \left(\begin{cmatrix}[c|c]
    L_{11} \backslash U_{11} & U_{12} \\\hline
    L_{21} & \mathcal{L}_{22}\backslash\mathcal{U}_{22}\end{cmatrix},
    A_{22},n_1,n_2,\varepsilon/4\right)$
    \label{croutec:rec1}
    \medskip

    \State Here
    $M=\begin{cmatrix}[c|c;{2pt/2pt}c]
      L_{11}\backslash U_{11} & U_{12} & U_{13}\\ \hline
      L_{21} & L_{22}\backslash U_{22} & \mathcal{U}_{23}
        \\ \hdashline[2pt/2pt]
      L_{31} & \mathcal{L}_{32} & \mathcal{L}_{33}\backslash\mathcal{U}_{33}
    \end{cmatrix}$\medskip
    \medskip

   \State $\text{\TrsmEC{Lower}{Left}}(\mathcal{U}_{23}, A_{23}-L_{21}U_{13}, 
      L_{22}, n_2, n_1, n_3,\varepsilon/4)$
      \label{croutec:trsm1}
      \newline
    \hspace*{\fill}\Comment{$A_{23}-L_{21}U_{13}$ is left unevaluated}
    \medskip

    \State $\text{\TrsmEC{Upper}{Right}}(\mathcal{L}_{32},
      A_{32}-L_{31}U_{12},U_{22},
      n_3, n_1, n_2,\varepsilon/4)$
      \label{croutec:trsm2}
      \newline
    \hspace*{\fill}\Comment{$A_{32}-L_{31}U_{12}$ is left unevaluated}
    \medskip

    \State Here $M=\begin{cmatrix}[cc|c]
        L_{11}\backslash U_{11} & U_{12} & U_{13}\\
        L_{21} & L_{22}\backslash U_{22} & U_{23}\\
        \hline
        L_{31} & L_{32} & \mathcal{L}_{33}\backslash\mathcal{U}_{33}
      \end{cmatrix}$
    \medskip

    \State $\text{\texttt{CroutEC}} \left( M, A_{33}, n_1+n_2, n_3,
      \varepsilon/4\right) $
      \label{croutec:rec2}
    \medskip

    \State Now $M=\begin{bmatrix}
      L_{11}\backslash U_{11} & U_{12} & U_{13}\\
      L_{21} & L_{22}\backslash U_{22} & U_{23}\\
      L_{31} & L_{32} & L_{33}\backslash U_{33}\end{bmatrix}$.
\EndIf
   \end{algorithmic}
\end{algorithm}

\begin{theorem}\label{thm:croutec}
  For $A\in\F^{n{\times}n}$ which has GRP,
  $\mathcal{L}\in\F^{n{\times}n}$ unit lower triangular
  and
  $\mathcal{U}\in\F^{n{\times}n}$ upper triangular,
  with total non-zero entries
  $t=\nnz{A}+\nnz{\mathcal{L}}+\nnz{\mathcal{U}}$,
  failure bound $0<\varepsilon<1$,
  and $k$ errors in $\mathcal{L}$ and $\mathcal{U}$,
  Algorithm \texttt{CroutEC}$(\begin{bmatrix}\mathcal{L}\backslash \mathcal{U}\end{bmatrix},A,0,n,\varepsilon)$
  is correct and runs in time
  \[
  \softoh{(t+k)(1+\log_{\#\F}\tfrac{1}{\varepsilon})+ \min\{kn,k^{\omega-2}n^{4-\omega}\}}.
  \]
\end{theorem}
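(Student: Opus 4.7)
The plan is to prove correctness and the failure probability by structural induction on $n_\rest$, and then bound the total running time by summing the per-call costs of \cref{thm:trsmec} over the recursion tree. Since \texttt{CroutEC} replays the block structure of \texttt{Crout} with each GEMM+TRSM combination replaced by a \texttt{TrsmEC} call, the algebraic identities of \cref{thm:crout} transfer verbatim: assuming every sub-call meets its specification, induction yields $A=LU$. It remains to bound the failure probability. Each node passes $\varepsilon/4$ to each of its two \texttt{TrsmEC} calls and its two recursive calls, so a union bound yields $F(n_\rest,\varepsilon)\le 2(\varepsilon/4)+2F(n_\rest/2,\varepsilon/4)$, which solves to $F(n_\rest,\varepsilon)\le\varepsilon$. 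Note that the geometric shrinkage of $\varepsilon$ by a factor $4$ per level inflates the factor $1+\log_{\#\F}(1/\varepsilon)$ by at most an additive $O(\log n)$, which is absorbed by $\softoh{}$.

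For the complexity, at an internal node $v$ of depth $\ell$ working on a block of side length $\approx n/2^\ell$, \cref{thm:trsmec} charges
\[
  \softoh{(t_v+k_v)(1+\log_{\#\F}(1/\varepsilon))+(n/2^\ell)\cdot\min\{k_v,\,k_v^{\omega-2}(n/2^\ell)^{3-\omega}\}},
\]
where $k_v$ counts errors cleared locally and $t_v$ is the nonzero count over the referenced blocks. Three aggregations are needed. \emph{(i) Linear part.} Each nonzero of the inputs $A,\mathcal{L},\mathcal{U}$ and each entry of the computed $L,U$ is referenced by at most $O(\log n)$ nodes: the off-diagonal blocks $A_{23},A_{32}$ are disjoint across siblings, and a computed entry of $L$ or $U$ lies in an $L_{21}$ or $U_{13}$ block of at most one ancestor per level. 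Hence $\sum_v (t_v+k_v)=\softoh{t+k}$, giving the $\softoh{(t+k)(1+\log_{\#\F}(1/\varepsilon))}$ contribution. \emph{(ii) The $kn$ branch of the $\min$.} Using $\sum_{v\text{ at level }\ell}k_v\le k$ and summing $(n/2^\ell)\cdot k_v$ gives a geometric series that collapses to $O(nk)$. \emph{(iii) The $k^{\omega-2}n^{3-\omega}$ branch.} Since $\omega-2\in(0,1]$, the power-mean inequality applied to $2^\ell$ nonnegative reals gives
\[
  \sum_{v\text{ at level }\ell}k_v^{\omega-2}\le(2^\ell)^{3-\omega}\,k^{\omega-2};
\]
multiplying by $(n/2^\ell)^{4-\omega}$ leaves $k^{\omega-2}n^{4-\omega}\cdot 2^{-\ell}$, and summing over $\ell$ yields $O(k^{\omega-2}n^{4-\omega})$.

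The main obstacle I expect is making step (iii) rigorous while also cleanly justifying that the $\min$ in \cref{thm:trsmec}'s per-call cost can be distributed under the sum over nodes; the power-mean step depends on the precise range of $\omega$ and on carefully aligning the number of nodes per level with the block dimensions. The bookkeeping in step (i), while routine, is subtle because the already-computed $L_{21},U_{13}$ entries propagate across subtrees, so one must verify that each entry is touched a logarithmic number of times rather than polynomially. Once these two level sums are established, combining them with the failure-probability bound above produces exactly the advertised complexity.
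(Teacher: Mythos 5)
Your proposal follows essentially the same route as the paper: correctness by induction replaying \cref{thm:crout}, a union bound over the $\varepsilon/4$ splits, level-by-level aggregation using disjointness of the blocks touched at each recursion level, distributing the $\min$ over the sum, and a concavity (H\"older/power-mean) inequality to handle the $k_v^{\omega-2}$ terms. One inaccuracy to fix: you charge a node at depth $\ell$ a cost of $(n/2^\ell)\cdot\min\{k_v,\,k_v^{\omega-2}(n/2^\ell)^{3-\omega}\}$, but the prefactor in \cref{thm:trsmec} is $\max\{n_{ij},\ell_{ij}\}$, and the inner blackbox dimension $\ell_{ij}$ in the calls on \cref{croutec:trsm1,croutec:trsm2} equals $n_1$, the size of the already-eliminated leading block, which can be as large as $n$ at any depth; so the correct prefactor is $n$, not $n/2^\ell$. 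This turns your geometric series in steps (ii) and (iii) into a sum of $\log n$ roughly equal terms (each $O(n^{4-\omega}k^{\omega-2})$, resp.\ each at most $nk$ in total), which still yields the stated bound inside the $\softoh{\cdot}$.
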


\begin{proof}
  Correctness follows from Theorems~\ref{thm:crout}
  and~\ref{thm:trsmec}: we rewrite Algorithm~\ref{alg:crout}, but
  replace the intermediate two matrix multiplications and two \Trsm{}{}s
  by two calls, with blackboxes, to Algorithm~\ref{alg:trsmec}.
  Passing $\varepsilon/4$ to all subroutines ensures that the total
  probability of failure in any Frievalds check in any \TrsmEC{}{}
  is at most $\varepsilon$. Note that the shrinking $\varepsilon$ does
  not affect the soft-oh complexity, since at the bottom level we will
  have $\varepsilon'=\varepsilon/(4^{\log_2(n)})=\varepsilon/n^2$, and
  $\log\tfrac{1}{\varepsilon'}$ is
  $O(\log\tfrac{1}{\varepsilon}+\log n)$.

  For the complexity, note that since $A$ has GRP, $\nnz{A}\ge n$.
  The stated cost bound depends crucially on the following fact:
  in each level of recursive calls to \texttt{CroutEC},
  each call is correcting a single diagonal block
  $\mathcal{L}_{\rest\rest}\backslash\mathcal{U}_{\rest\rest}$
  and uses only the parts of $M$ and $A$ above and left of that block.

  This claim is true by inspection of the algorithm: the blocks
  $\mathcal{L}_{22}\backslash\mathcal{U}_{22}$ and
  $\mathcal{L}_{33}\backslash\mathcal{U}_{33}$ being corrected
  in the two recursive calls to \texttt{CroutEC} on
  \cref{croutec:rec1,croutec:rec2} are clearly disjoint diagonal blocks.
  And we see also that the algorithm never uses the top-left part of $M$,
  namely $L_{11}{\backslash}U_{11}$; all arguments to the calls to \TrsmEC{}{} on
  \cref{croutec:trsm1,croutec:trsm2} are (disjoint) submatrices above and left of
  $\mathcal{L}_{\rest\rest}\backslash\mathcal{U}_{\rest\rest}$, as
  shown, e.g., in Figure~\ref{fig:proofcrout}.

\begin{figure}[ht]\hfill
  \begin{minipage}[c]{0.45\columnwidth}
    \includegraphics[width=\columnwidth]{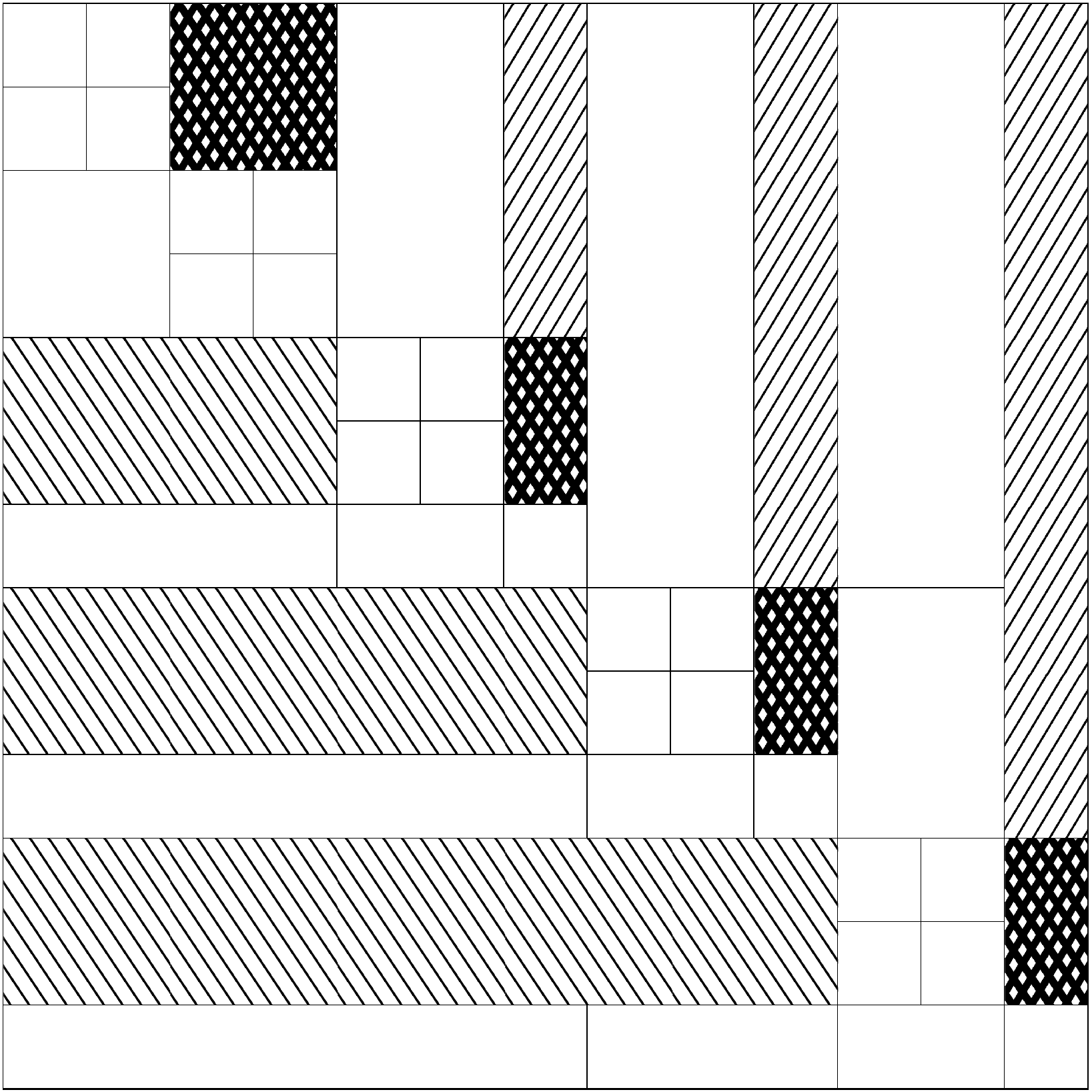}
  \end{minipage}\hfill
  \begin{minipage}[c]{0.45\columnwidth}
    \caption{All updates at the third recursion level at
      step~\ref{croutec:trsm1}. The figure indicates the
      locations of $\mathcal{U}_{23}$, $L_{21}$, and $U_{13}$.}
    \label{fig:proofcrout}
  \end{minipage}\hfill
\end{figure}

  With this understanding, we can perform the analysis.
  The work of the algorithm is entirely in the dot products in the base case,
  and the calls to \TrsmEC{}{} in the recursive case.

  For the base case dot products, these are to correct the diagonal entries of
  $\mathcal{U}$, using the diagonal of $A$ and disjoint, already-corrected rows of
  $L$ and columns of $U$. Therefore the total cost of the dot products is
  $O(t+k)$.

  For the rest, consider the $i$th recursive level of calls to \texttt{CroutEC},
  where $0\le i < \log_2 n$.
  There will be exactly $2^{i+1}$ calls to \TrsmEC{}{} on this level, whose inputs
  are all disjoint, from the claim above. Write
  $m_{ij}, \ell_{ij},$ etc.\ for the parameters
  in the $j$th call to \TrsmEC{}{} on level $i$, for $1\le j\le 2^{i+1}$.

  Every call to \TrsmEC{}{} on the $i$th level satisfies:
  \begin{itemize}
    \item $m_{ij},n_{ij} \in O(n/2^i)$;
    \item $\ell_{ij},N_{ij}=\max\{n_{ij};\ell_{ij}\} \le n$;
    \item $\varepsilon_{ij} = \varepsilon/4^i$, so that
      $\lceil\log_{\#\F}(1/\varepsilon_{ij})\rceil$ is
      $O((1 + \log_{\#\F}\tfrac{1}{\varepsilon})+\log n)$;
    \item $\sum_{j=1}^{2^{i+1}} t_{ij} \le t$
      because the submatrices are disjoint at the same recursive level; and
    \item $\sum_{i=0}^{\log_2 n}\sum_{j=1}^{2^{i+1}} k_{ij} \le k$
      because each error is only corrected once.
  \end{itemize}
  Now we wish to compute the total cost of all calls to \TrsmEC{}{}, which by
  \cref{thm:trsmec} and the notation just introduced, is soft-oh of
  \[
    \sum_{i=0}^{\log_2 n}
    \sum_{j=1}^{2^{i+1}}
      \left[
      (t_{ij}+k_{ij}+m_{ij})(1 + \log_{\#\F}\tfrac{1}{\varepsilon_{ij}})
      + N_{ij} \cdot \min\{k_{ij}, k_{ij}^{\omega-2} n_{ij}^{3-\omega}\}
      \right].
  \]

  Taking the first term of the sum, this is
  \begin{align*}
    &\sum_{i=0}^{\log_2 n}
    \sum_{j=1}^{2^{i+1}}
      (t_{ij} + k_{ij} + \tfrac{n}{2^i})(1 + \log_{\#\F}\tfrac{1}{\varepsilon}+\log n)
    \\ &\le
    O\left((t\log n + k + 2n\log n)(1 + \log_{\#\F}\tfrac{1}{\varepsilon}+\log n)\right)
    \\ &=
    \softoh{(t + k + n)(1 + \log_{\#\F}\tfrac{1}{\varepsilon})},
  \end{align*}
  which gives the first term in our stated complexity.

  The second term of the sum simplifies to
  \begin{equation}\label{eqn:croutec2}
    n\cdot
    \sum_{i=0}^{\log_2 n}
    \sum_{j=1}^{2^{i+1}}
      \min\{k_{ij}, k_{ij}^{\omega-2} (n/2^i)^{3-\omega}\}.
  \end{equation}

  Now observe that, by definition, each individual summand is less
  than or equal to both parts of the $\min$ expression. Therefore we bound the sum of
  minima by the minimum of sums; that is, the previous summation is at most
  \[
    n\cdot\min\left\{
      {\textstyle \sum_i \sum_j} k_{ij},\quad
      {\textstyle \sum_i \sum_j} k_{ij}^{\omega-2} (n/2^i)^{3-\omega}
    \right\}.
  \]

  Because each error is only corrected once, $\sum_i \sum_j k_{ij} \le k$.
  Using H\"older's inequality and $0 \le \omega-2 < 1$, this yields
  \[
    \sum_{j=1}^{r} k_{ij}^{\omega-2} \le r \left(\frac{\sum_{j=1}^r
    k_{ij}}{r}\right)^{\omega-2} \le r^{3-\omega} k^{\omega-2}, \]
  for all $r$.  Applying this to the second summation above gives
  \begin{align*}
    \sum_{i=0}^{\log_2 n} \sum_{j=1}^{2^{i+1}}
      k_{ij}^{\omega-2} (n/2^i)^{3-\omega}
    & =
      n^{3-\omega}
      \sum_{i=0}^{\log_2 n}
      2^{(\omega-3)i}
      \sum_{j=1}^{2^{i+1}}
      k_{ij}^{\omega-2}
    \\ & \le
      n^{3-\omega}
      \sum_{i=0}^{\log_2 n}
      2^{(\omega-3)i}
      \cdot
      2^{(i+1)(3-\omega)}
      \cdot
      k^{\omega-2}
    \\ & \le
      O(k^{\omega-2} n^{3-\omega}\log n).
  \end{align*}

  Then the entirety of \eqref{eqn:croutec2} becomes just
  \(\softoh{\min\{kn, k^{\omega-2} n^{4-\omega}\}},\)
  which gives the second term in the stated complexity.
\qedhere\end{proof}

\subsection{Correcting a rectangular, rank-deficient LU}
\label{sec:rect}

For $m \leq n$, assume first that \(A=[A_1\quad A_2]\)
is a rectangular $m{\times}n$ matrix such that
$A_1$ is square $m{\times}m$ with GRP.
Assume also that
\([\mathcal{L} \backslash \mathcal{U}_1\quad  \mathcal{U}_2]\)
is an approximate LU decomposition.  Then we may correct potential errors as follows:
\begin{compactenum}
\item
  $\text{\texttt{CroutEC}}([\mathcal{L}_1\backslash \mathcal{U}_1],A_1,0,n,\varepsilon/2)$;
  \hspace*{\fill} \Comment{corrects $L_1 \backslash U_1$}
\item
  $\text{\TrsmEC{Lower}{Left}}(\mathcal{U}_2,A_2,L_1,m,m,n-m,\varepsilon/2)$.
  \hspace*{\fill} \Comment{corrects $U_2$}
\end{compactenum}

Second, if $A_1$ is rank deficient, but still GRP, then the first occurrence of
a zero pivot $U_{\rest\rest}$, line~\ref{line:pivot} in Algorithm~\ref{alg:croutec},
reveals the correct rank: at this point $L_{11}$, $U_{11}$, and the
upper (resp. left) part of $L_{\rest 1}$ (resp. $U_{1\rest}$) are correct. 
It is thus sufficient to stop the elimination there, and recover the
remaining parts of
\[ L=\begin{bmatrix} L_{11}\\L_{\rest 1}\end{bmatrix}\in\F^{m{\times}r}, \quad
   U=\begin{bmatrix} U_{11}&U_{1\rest}\end{bmatrix}\in\F^{r{\times}n}, \]
using (corrected) triangular system solves, as follows:
\begin{compactenum}
\item Let $r=\textrm{rank}\; A$ and
  $A=\begin{bmatrix} A_{11}&A_{1\rest}\\ A_{\rest 1} & A_{\rest\rest}\end{bmatrix}$,
  where $A_{11}$ is $r{\times}r$.
\item \texttt{CroutEC}$(\begin{bmatrix}\mathcal{L}_{11}\backslash\mathcal{U}_{11}\end{bmatrix},
  A_{11},0,r,\varepsilon/3)$
\item
$\text{\TrsmEC{Lower}{Left}}(\mathcal{U}_{1\rest},A_{1\rest},L_{11},r,r,n-r,\varepsilon/3)$.
  \hspace*{\fill} \Comment{corrects $U_{1\rest}$}
\item $\text{\TrsmEC{Upper}{Right}}(\mathcal{L}_{\rest 1},A_{\rest
  1},U_{11},m-r,r,r,\varepsilon/3)$.
  \hspace*{\fill} \Comment{corrects $L_{\rest 1}$}
\end{compactenum}

\section{System solving}\label{sec:linsys}
One application of LU factorization is linear system solving.
Say $A$ is $n{\times}n$ invertible, and $B$ is $m{\times}n$.
Ideally, correcting a
solution $\mathcal{X}$ to the linear system $XA=B$, should depend only
on the errors in~$\mathcal{X}$.
Indeed, our algorithm \TrsmEC{Upper}{Right} does exactly this in the special
case that $A$ is upper-triangular.

Unfortunately, we do not know how to do this for a general non-singular
$A$ using the previous
techniques, as we for instance do not know of an efficient way to
compute the nonzero columns of the erroneous entries in
$(X-\mathcal{X}) = BA^{-1}-\mathcal{X}$.

By adding some data to the solution $\mathcal{X}$ (and therefore,
unfortunately, potentially more errors), there are several
possibilities:

\begin{itemize}
\item Generically, a first solution is to proceed as in
  Section~\ref{sec:kns}. One can use \cite{Kaltofen:2011:quadcert}, but
  then the complexity depends not only on errors in $\mathcal{X}$, but
  on errors in all the intermediate matrix products.
\item A second solution is to invert the matrix $A$ using
  \cite[Algorithm~6]{Roche:2018:ECFMMI} (\texttt{InverseEC}) and then
  multiply the right-hand side using
  \cite[Algorithm~5]{Roche:2018:ECFMMI} (\texttt{MultiplyEC}).
  This requires some extra data to correct, namely a candidate inverse
  matrix, and the complexity depends on errors appearing now both in the
  system solution and in that inverse candidate matrix as follows:
\begin{enumerate}
\item $Z\leftarrow{}\texttt{InverseEC}(A,\mathcal{Z})$; \hfill\(\triangleright\){$Z=A^{-1}$}
\item $X\leftarrow{}\texttt{MultiplyEC}(B,Z,\mathcal{X})$. \hfill\(\triangleright\){$X=BA^{-1}$}
\end{enumerate}
\end{itemize}

Now, computing an inverse, as well as correcting it, is more expensive
than using an LU factorization: for the computation itself, the inverse
is more expensive by a constant factor of $3$ (assuming classic matrix arithmetic), and
for \texttt{InverseEC} the complexity
of~\cite[Theorem~8.3]{Roche:2018:ECFMMI} requires the fast selection
of linearly independent rows using~\cite{Cheung:2013:rank}, which might be
prohibitive in practice.

For these reasons, we prefer to solve systems using an LU factorization.
The goal of the remainder of this section is to do this
with a similar complexity as for \cref{alg:croutec} and
while avoiding to rely on the sophisticated algorithm
from~\cite{Cheung:2013:rank}.

\subsection{Small right-hand side}\label{ssec:srhs}
An intermediate solution, requiring the same amount of
extra data as the version with the inverse matrix, but using only fast
routines, can be as follows. Use as extra data a candidate
factorization $\mathcal{LU}$ and a candidate intermediate right-hand
side $\mathcal{Y}$ of dimension $m{\times}n$, such that
$\mathcal{Y},\mathcal{U}$ are approximations to
the true $Y,U$ with $YU=B$. We simultaneously correct
errors in $\mathcal{X}$, $\mathcal{L}$, $\mathcal{U}$,
and $\mathcal{Y}$ as follows:
\begin{enumerate}
\item \texttt{CroutEC}$(\mathcal{L}\backslash \mathcal{U}, A,
  0, n, \varepsilon/3)$; \hfill\(\triangleright\){\; $L$ and $U$ with $A=LU$}
\item
$\TrsmEC{Upper}{Right}(\mathcal{Y},B,U,m,0,n,\varepsilon/3)$; \hfill\(\triangleright\){\; $Y$ with $YU=B$}
\item
$\TrsmEC{Lower}{Right}(\mathcal{X},Y,L,m,0,n,\varepsilon/3)$. \hfill\(\triangleright\){\; $X$ with $XL=Y$}
\end{enumerate}
Note, of course, that if the number of rows $m$ in $B$ is very small,
say only $m\le n^{o(1)}$,
then it is faster to recover $\mathcal{L}$ and $\mathcal{U}$ only, by
running \texttt{CroutEC}, and then compute $Y$ and $X$ directly from the
corrected $L$ and $U$ with classical TRSMs.

\subsection{Large right-hand side}
If the row dimension $m$ of $B$ is large with respect to the column
dimension $n$,
then the matrix $\mathcal{Y}$ from above will be larger than
$\mathcal{U}$. The client
can instead ask the server to provide as extra data $\mathcal{R}$ as a candidate for
$U^{-1}$, to correct it with $U$, and then to
use $L$ and $U^{-1}$ to correct $\mathcal{X}$ directly:
\begin{enumerate}
\item $\texttt{CroutEC}(\mathcal{L}\backslash
\mathcal{U},A,0,n,\varepsilon/3)$; \hfill\(\triangleright\){\; $L$ and
  $U$ with $A=LU$}
\item $\texttt{TrInvEC}(\mathcal{R},U,n,\varepsilon/3)$;
  \hfill\(\triangleright\){$R=U^{-1}$}
\item \TrsmEC{Lower}{Right}$(\mathcal{X}, BR, L, m, n, n,
\varepsilon/3)$,
  \hfill\(\triangleright\){$XL=BR=BU^{-1}$}
\end{enumerate}
with \texttt{TrInvEC} a variant of \texttt{InverseEC} sketched below as
\cref{alg:trinvec}
(where TRSV is a triangular system solve with a
column vector and TRMV is a matrix-vector multiplication).
This does not require the expensive algorithm
of~\cite{Cheung:2013:rank} to select independent columns, as the
matrix is triangular.

Note that, in the call to \TrsmEC{Lower}{Right} in
the last step, the right-hand side $BR$ is left unevaluated, just as in
the calls to \TrsmEC{}{} from \cref{alg:croutec}.

\begin{algorithm}
  \caption{\texttt{TrInvEC}$(U,\mathcal{R},n,\varepsilon)$ corrects $\mathcal{R}+E=U^{-1}$}\label{alg:trinvec}
  \begin{algorithmic}[1]
    \State
    $P_J=\texttt{ColSupport}(U^{-1}v-\mathcal{R}v)$;\Comment{$U^{-1}v$ via
        TRSV with $U$, $\mathcal{R}v$ via TRMV with candidate}
    \State $T=( P^T_J U P_J)^{-1}$;\Comment{$O(r^\omega)$}
    \State $E'=(EP_J)=(I-\mathcal{R}U)P_J T^{-1}$, so it can
    be recovered via multi-sparse interpolation as 
    $V E' = \left(VP_J-(V\mathcal{R})(UP_J)\right)T^{-1}$.
  \end{algorithmic}
\end{algorithm}

\section{Conclusion}
We have shown how to efficiently correct errors in an LU factorization,
and how to apply this error correction to system solving.

A few remaining challenges are to:
\begin{itemize}
\item Generalize our error-correcting algorithms to matrices $A$ which do not
  have GRP, correcting more general factorizations such as
  $A=PLUQ$ where $P,Q$ are permutation matrices. Our approach works
  directly if the permutations $P$ and $Q$ are known to be error-free,
  but correcting erroneous permutations $\mathcal{P}$ and $\mathcal{Q}$
  is more difficult.
\item Directly correct errors in $\mathcal{L}$ and $\mathcal{R}$
  such that $AR=L$, i.e., $R=U^{-1}$. This would be useful for system
  solving, as we have seen above.
\item More generally, correcting errors only in the solution $X$ of a
  linear system, without any extra information from the server, would be
  an even more ambitious goal.
\end{itemize}

\bibliographystyle{plainurl}
\bibliography{lubib}

\end{document}